\newcommand{\blind}{0}
\newtheorem{prop}{Proposition}[section]
\begin{document}

\def\spacingset#1{\renewcommand{\baselinestretch}%
{#1}\small\normalsize} \spacingset{1}


\if0\blind
{
  \title{\bf A Model-based Semi-supervised Clustering Methodology}
  \author{Jordan Yoder\thanks{
    This work is partially funded by the National Security Science and Engineering Faculty Fellowship (NSSEFF),
the Johns Hopkins University Human Language Technology Center of Excellence (JHU HLT COE), and the
XDATA program of the Defense Advanced Research Projects Agency (DARPA) administered through Air
Force Research Laboratory contract FA8750-12-2-0303.}\hspace{.2cm}\\
    Department of Applied Mathematics and Statistics, Johns Hopkins University\\
    and \\
    Carey E. Priebe \\
    Department of Applied Mathematics and Statistics, Johns Hopkins University}
  \maketitle
} \fi

\if1\blind
{
  \bigskip
  \bigskip
  \bigskip
  \begin{center}
    {\LARGE\bf A Model-based Semi-supervised Clustering Methodology}
\end{center}
  \medskip
} \fi

\bigskip
\begin{abstract}
We consider an extension of model-based clustering to the semi-supervised case, where some of the data are pre-labeled.  We provide a derivation of the Bayesian Information Criterion (BIC) approximation to the Bayes factor in this setting. We then use the BIC to the select number of clusters and the variables useful for clustering.  We demonstrate the efficacy of this adaptation of the model-based clustering paradigm through two simulation examples and a fly larvae behavioral dataset in which lines of neurons are clustered into behavioral groups.
\end{abstract}

\noindent%
{\it Keywords:}  BIC, Machine learning, Behavioral data, Model selection, GMM, Mclust
\vfill

\newpage
\spacingset{1.45} 
\section{Introduction}
Clustering is the art of partitioning data into distinct and scientifically relevant classes by assigning labels to observations.  Clustering is typically performed in an unsupervised setting, where none of the observed data initially have labels.  In semi-supervised learning, some of the data have labels, but others do not; the goal is to classify the unlabeled data by assigning them to the same classes as the labeled data.  In the gap between these two settings, there is semi-supervised clustering, in which some of the data have labels, and there may not be labeled data from every class.  Furthermore, the total number of classes may be unknown.  Just as in semi-supervised learning, leveraging the known labels allows for a more informed clustering.

There are many strategies for solving clustering problems.  Some clustering procedures are based on heuristics that lack a principled justification, and many of the basic questions of clustering (e.g., how many classes there are) are often left to the intuition of the practitioner.   Fraley and Raftery \cite{fraley2002model} review model-based clustering, which recasts the task of clustering as a model selection problem, which is well-studied in the field of statistics.  The main assumption in  model-based clustering is that the data are drawn from one of $G$ distributions, where each distribution represents a cluster.  Model selection can be accomplished by computing approximate Bayes factors for competing models with different numbers and/or types of components.  

Much of the recent work in model-based clustering has centered on variable selection.  Raftery and Dean \cite{raftery2006variable} proposed a greedy algorithm for model-based clustering in the unsupervised setting; it used the BIC to choose the number of clusters and the features to consider while clustering by proposing linear relationships between variables that influence clustering and those independent of clustering.  Murphy \emph{et al.} \cite{murphy2010variable} adapted Raftery and Dean's algorithm to semi-supervised learning.  Maugis \emph{et al.} \citep{maugis2009variable, maugis2009genvariable} consider many more scenarios of dependency structures between the clustering variables and the remaining variables than in \cite{raftery2006variable}.  Taking a different approach, Witten \emph{et al.} \cite{witten2010framework} offer a framework for variable selection in clustering using sparse k-means or sparse hierarchical algorithms by modifying the corresponding objective function to include penalty constraints.  In their simulations, they outperform the methods of \cite{raftery2006variable} considerably.

The BIC is generally denoted as
$$BIC = 2 \mathcal{L} - d \log(n),$$
where $\mathcal{L}$ is the maximum of the likelihood, $d$ is the number of parameters estimated, and $n$ is the number of data used to estimate those parameters.  Alternative information criteria generally differ by having different values for the penalty term.  For example, the sample size adjusted BIC \cite{boekee1981order}  is defined as $$sBIC = 2 \mathcal{L} - d \log(\frac{n+2}{24}).$$

Because $BIC - sBIC = o(1),$ one may wonder of the efficacy of such a deviation from the BIC.  This question is particularly salient when considering that the standard derivation of the BIC as the approximation to the integrated loglikelihood is only $O(1)$ accurate asymptotically (so that o(1) terms are negligible).  In this paper, we will present such a modified BIC that is asymptotically equivalent to the standard BIC and thoroughly explore an analytical example of how such a modification could be useful.  

In this paper, we quickly review model based clustering in Section \ref{review}.   We state our derivation of a modified BIC to the semi-supervised case in Section \ref{methods} that is asymptotically equivalent to the standard BIC.  We follow this with a detailed discussion of $o(1)$ equivalent information criteria, analytical calculations for a toy example, and simulations for a semi-supervised clustering example.  Next, we apply our methods to the fly larvae behavioral dataset in Section \ref{fly}.  Finally, we conclude the paper with a discussion of the results and extensions to our method.

\section{Review of Model Based Clustering}
\label{review}

Assume that the data $X_1, X_2, \dots, X_n$ are distributed i.i.d. according to a mixture model,
$$
f_\theta(\cdot) = \sum_{k = 1}^{G} \pi_k f_{\theta_k}(\cdot),
$$
where $\theta_{k}$ are the parameters of the $k^{\text{th}}$ component of the mixture and $G$ is the total number of components.
It can be shown that this is equivalent to the following generative process: for each $i \in \{1, 2, \dots, n\}$
\begin{enumerate}
\item[(1)] draw $Z_i$ from $\text{multinomial}\left(\pi_1, \pi_2, \dots, \pi_G\right)$
\item[(2)] draw $X_i$ from $f_{\theta_{Z_i}}$.
\end{enumerate}

If we consider each of the $G$ components as representing a single cluster, then the problem of clustering the data is that of unveiling each latent $Z_i$ from the generative process.  The expectation-maximization  (EM) algorithm can be used to find the maximum likelihood estimates for the parameters of the model and the posterior probabilities of cluster membership for each $X_i$ (cf. \cite{dempster1977maximum}).

With this formulation, we have a fully probabilistic clustering scheme.  We can set the cluster of the $i^{\text{th}}$ observation to the maximum a posteriori estimate:
$$\hat{Z_i} = \text{argmax}_{k} \pi_k f_k(x_i).$$  Note that the posterior probability can give us a measure of confidence about the $i^{\text{th}}$ classification.

Here, we have used $f_{\theta_k}$ as a general density function, but it should be noted that the multivariate Gaussian distribution is the most common choice of distribution.  We will eventually use an information criterion, such as the Bayesian Information Criterion (BIC), to assess the relative quality of different clusterings.  Then, the number of clusters can be chosen using the BIC, which rewards model fit and penalizes model complexity.

By considering different constraints to the covariance matrices in the Gaussian components, we can reduce the number of parameters estimated, thus lowering the influence of the penalty term in the BIC.  This allows for  simpler clusters to be chosen over complex clusters.  Celeux and Govaert \cite{celeux1995gaussian} consider a spectral decomposition of the $k^{\text{th}}$ component's covariance matrix,
$$\Sigma_k = \lambda_k D^T_k A_k D_k,$$
where $\lambda_k$ represents the largest eigenvalue, $A_k$ is a diagonal matrix whose largest entry is $1$, and $D_k$ is a matrix of the corresponding eigenvectors.  The interpretation of each term as it relates to cluster $k$ is as follows: $\lambda_k$ represents the volume, $D_k$ the orientation, and $A_k$ the shape.  By forcing one or more of these terms to be the same across all clusters, we can reduce the number of parameters to be estimated from $Gd+G\frac{d(d+1)}{2}$ in the unconstrained case ($\Sigma_k = \lambda_k D^T_k A_k D_k$) to $Gd + G - 1 + \frac{d(d+1)}{2}$ in the most constrained case ($\Sigma_k = \lambda D^T A D$).  We can also force additional constraints, such $D_k = I$ 
and/or $A_k=I$, leading to the simplest model: $\Sigma_k = \lambda I$ with only $G(d+1)$ parameters to estimate.

\section{Methodology}
\label{methods}
\label{sec:ssModel}
The main theoretical contribution of this paper is to derive an adjustment to the BIC for the semi-supervised clustering setting followed by an exploration about $o(1)$ equivalent information criteria. We will set forth a formalization of a more general setting, derive the BIC in this broader setting, and then apply our result to the special case of semi-supervised clustering.   The modified BIC then represents a principled measure by which to choose the number of clusters and the variables for clustering when performing semi-supervised clustering.

Consider $n=\sum_{i=1}^{C}n_i$ independent random variables
\begin{gather*}
X_1^{(1)}, X_2^{(1)}, \dots, X_{n_1}^{(1)} \overset{\text{i.i.d.}}{\sim} f_{\theta_1},  \\
X_1^{(2)}, X_2^{(2)}, \dots, X_{n_2}^{(2)} \overset{\text{i.i.d.}}{\sim} f_{\theta_2},\\ 
\vdots \\  
X_1^{(C)}, X_2^{(C)}, \dots, X_{n_C}^{(C)} \overset{\text{i.i.d.}}{\sim} f_{\theta_C}, 
\end{gather*}
where $\theta_1 \in \Theta_1 \subset \mathbb{R}^{d_1}$ and $\theta_j \in \Theta_j \subset \Theta_1$ for $j = 2, 3, \dots C$ and
$$ (f_{\theta_1}, f_{\theta_2}, \dots, f_{\theta_C}) \in M = \{ (f_{\theta^{(1)}}, f_{\theta^{(2)}}, \dots, f_{\theta^{(C)}}): \theta = (\theta^{(1)}, \dots, \theta^{(C)})\in \Theta_1 \times \Theta_2 \times \dots \times \Theta_C\}.$$
Collect all of the $X's$ into set $D$.

Model $M$, while more general than strictly necessary, encompasses the semi-supervised case.  Consider the first group of $n_1$ random variables as those for which we do not have labels and each of the other $C-1$ groups as those whose labels we know.  Then, for a proposed total number of clusters $G \geq C-1$, we assume
$$f_{\theta_1}(x) = \sum_{j=1}^{G} \pi_j \phi(x; \mu_j, \Sigma_j),$$ 
where $\phi(\cdot; \mu_j, \Sigma_j)$ is a multivariate normal pdf with mean $\mu_j$ and covariance matrix $\Sigma_j$, and $ \sum_{j=1}^{G} \pi_j =1$.  Also, for each $k \in \{2, 3, \dots, C\}$,
$$f_{\theta_k}(x) = \phi(x;\mu_{j_k}, \Sigma_{j_k}),$$ where the double subscript $j_k$ is to account for possible relabeling.  It follows that 
$$\theta_1 = \left( (\pi_1, \pi_2, \dots, \pi_G), (\mu_1, \mu_2, \dots, \mu_G), (\Sigma_1,\Sigma_2, \dots, \Sigma_G) \right) \in \Theta_1, $$
where
{\footnotesize
\begin{align*}
\Theta_1 = \{((\pi_1, \pi_2, \dots, \pi_G) : \sum_{i=1}^G \pi_i =1,
 \pi_i \in [0,1]\} \times \mathbb{R}^{d \times G} \times \{(\Sigma_1,\Sigma_2, \dots, \Sigma_G) | \Sigma_i \succeq 0, \Sigma_i \in \mathbb{R}^{d \times d}\}.\end{align*}
 }

Then, for each $k \in \{2, 3, \dots, C\}$, $\Theta_k$ is the same as $\Theta_1$ except that the mixing coefficients $(\pi_1, \pi_2, \dots, \pi_G)$ are constrained such that $\pi_{j_k} =1$ and all other $\pi_i = 0$.

With this model, we can derive the Expectation step (E-step) and Maximization (M-step) of the EM algorithm.  Note that semi-supervised EM is not new;  indeed, similar calculations can be found in \cite{mclachlan2004finite} Section 2.19 and \cite{shental2003computing}.  

Here, we generalize the BIC to the semi-supervised case with the aim of not overly penalizing for supervised data.  The modified BIC approximation to the integrated likelihood of model $M$ is $$2 \log(P(D|M)) \approx 2\log(P(D|\hat{\theta}, M))  - d \log (n_1) = BIC^*_M,$$ where $\hat{\theta}$ is the maximum likelihood estimate (MLE).

We provide the derivation in the appendix.  Compare the above result to the classical BIC approximation:
$$BIC \approx 2 \log(P(D|M)) \approx 2\log(P(D|\hat{\theta}, M))  - d \log (n).$$   The derivation of the BIC (adjusted and normal) involves an O(1) term, which is dropped at the end.  Due to this, the difference between the two BICs is a smaller penalty in the semi-supervised case (i.e. $d \log(n_1)$ vs. $d\log(n)$, which is a $o(1)$ difference).  This can be interpreted as not penalizing more complicated cluster structures for the supervised data.  We will  discuss the merit of such a deviation in Section \ref{sec:o1Equiv}.

It is known that finite mixture models do not meet the regularity conditions for the BIC approximation used to approximate the integrated likelihood.  However, \cite{fraley2002model} comment that it is still an appropriate and effective approximation in the paradigm of model-based clustering.  Thus, we will not be too remiss in considering Theorem \ref{BIC} as applicable when performing semi-supervised clustering.   Because the model $M$ is appropriate for the semi-supervised case, we can use the modified BIC as a way to calculate Bayes factors for different number of total clusters $G$, subsets of variables to be included in the clustering, and parameterizations of $\Sigma_i$.  The largest BIC value will therefore correspond to our chosen model for clustering.  We will call the semi-supervised clustering algorithm using the BIC from Theorem \ref{BIC} to make the aforementioned selections \emph{ssClust} henceforth.

\section{o(1) equivalent Information Criteria}\label{sec:o1Equiv}

Here, we comment on the use of different penalty terms.  Suppose that we define an adjusted BIC, say $BIC'$ as a function of $m \in [1, \inf)$ as $BIC'(m) = 2 \mathcal{L} - d \log(m) = BIC - d\log(\frac{m}{n}).$

When does using such an adjustment matter?  Consider two models, say $M_0$ and $M_1$, in competition for selection.  Fix an $m$.  Suppose that
\begin{enumerate}
\item[(i)] $d_1 > d_0$ 
\item[(ii)] $BIC_0 > BIC_1$ and
\item[(iii)] $BIC'_0(m) < BIC'_1(m).$
\end{enumerate}
In this case, we see that under the original definition of the BIC, we would choose $M_1$ over $M_0$, and with the modified definition, we would do the opposite.

In order to analyze this under some assumptions, define two statistical tests:
$$T(\omega) = \mathbbm{1}\{BIC_1 > BIC_0\}$$ and $$T'_m(\omega) = \mathbbm{1}\{BIC'_1(m) > BIC'_0(m)\}.$$

When do $T$ and $T'$ differ?  

\begin{enumerate}
\item[Case 1:] $T=1$ and $T'_m = 0.$  For $m \leq n$ this is impossible.

\item[Case 2:]  $T = 0$ and $T'_m = 1.$  We have two subcases.
\begin{enumerate}
\item[(a)] Suppose $M_1$ is the true model.  We would be correct in choosing $T'_m$ over $T$. 
 \item[(b)]  Suppose $M_0$ is the true model.  Note that in this case, we would make a mistake by choosing $T'_m$ over $T$.
\end{enumerate}
\end{enumerate}

We would like to analyze the probabilities in case 2.  Preferably, (2.a) is much more probable than (2.b).  Unfortunately, under (2.a) the distribution of $W_{1,0}$ is only known for certain cases.  For example, with local alternatives of the form $\theta_n = \theta_0 + \frac{\Delta}{\sqrt{n}},$  $W_{1,0}(d_1 - d_0)$ is known to have noncentral chi square distribution with $df= d_1 - d_0$ and the appropriate non-centrality parameter.  Since this result requires rather quixotic assumptions to hold, we will not theoretically bound the probabilities for more general cases; instead, we will defer our analysis to a specific example.

We can say more about (2.b) with some assumptions.
\begin{prop}
Suppose the models are nested (so that $M_0$ is a submodel of $M_1$).
Define $W_{1,0} := \frac{2( \mathcal{L}_1 - \mathcal{L}_0)}{(d_1 - d_0)}.$
Under $M_0$, $T = 0$ and $T'_m = 1$ with probability 
\begin{equation} 
Pr_{M_0} \left(\log(m) < W_{1,0} < \log(n)\right) \rightarrow F(\log(n)) - F(\log(m)),
\label{eqn:badProb}
\end{equation}
where $F(\cdot)$ is the distribution function of an $F$ distribution with $df=(d_1 - d_0, \infty).$ 
\end{prop}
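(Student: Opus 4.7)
The plan is to first rewrite the joint event $\{T=0,\,T'_m=1\}$ directly as an event on $W_{1,0}$, and then invoke Wilks' likelihood-ratio theorem to pin down the limiting distribution of $W_{1,0}$ under $M_0$.

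Step one is a straightforward algebraic unpacking of the definitions. From $BIC_j = 2\mathcal{L}_j - d_j\log(n)$, the event $T=0$, i.e.\ $BIC_1 \leq BIC_0$, is equivalent to $2(\mathcal{L}_1-\mathcal{L}_0) \leq (d_1-d_0)\log(n)$, which by assumption (i) and the definition of $W_{1,0}$ becomes $W_{1,0} \leq \log(n)$. Similarly, $T'_m=1$, i.e.\ $BIC'_1(m) > BIC'_0(m)$, is equivalent to $W_{1,0} > \log(m)$. Hence the joint event is exactly $\{\log(m) < W_{1,0} \leq \log(n)\}$, which produces the first expression inside the probability in \eqref{eqn:badProb}.

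Step two is to identify the limit. Since the models are nested and we grant the regularity conditions underlying Wilks' theorem, we have $2(\mathcal{L}_1-\mathcal{L}_0) \xrightarrow{d} \chi^2_{d_1-d_0}$ under $M_0$, so $W_{1,0} \xrightarrow{d} \chi^2_{d_1-d_0}/(d_1-d_0)$. This limit is, by definition, an $F(d_1-d_0,\infty)$ random variable (the $F(k,\nu)$ distribution with $\nu \to \infty$ reduces to a scaled $\chi^2_k$). Applying the portmanteau theorem at continuity points of the limiting CDF $F$ then gives $\Pr_{M_0}(\log(m) < W_{1,0} \leq \log(n)) \to F(\log(n)) - F(\log(m))$, as claimed.

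The main obstacle I anticipate is that the upper threshold $\log(n)$ drifts to infinity with the sample size, so ordinary pointwise convergence in distribution at a fixed cutoff is not strictly enough to justify substituting $\log(n)$ inside $F$. To be rigorous I would either invoke a uniform rate of convergence for the Wilks approximation (a Berry--Esseen-type bound on the LR statistic, so the approximation is valid on windows $[\log m,\log n]$ that expand slowly with $n$), or simply read the statement as an asymptotic approximation in the usual statistical sense: the exact probability of a type-(2.b) error differs from $F(\log n) - F(\log m)$ by $o(1)$, which is all one needs for the qualitative conclusion that this error probability is governed by the gap between $\log m$ and $\log n$.
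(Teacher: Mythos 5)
Your proof follows essentially the same route as the paper's: the same algebraic reduction of the joint event $\{T=0,\ T'_m=1\}$ to $\{\log(m) < W_{1,0} < \log(n)\}$, followed by Wilks' theorem for the numerator and identification of the limiting law of $W_{1,0}$ as $F(d_1-d_0,\infty)$ (the paper cites Slutsky where you cite the portmanteau theorem, a cosmetic difference). Your added caveat that the upper cutoff $\log(n)$ drifts with the sample size, so that pointwise convergence in distribution does not strictly license plugging $\log(n)$ into the limiting CDF, is a genuine subtlety that the paper's proof passes over in silence, and either of your proposed fixes is a reasonable way to make the statement rigorous.
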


\begin{proof}
We can do some algebra on $(ii)$ and $(iii)$ to obtain equivalent condition that 
$$ (d_1 - d_0) \log(m) < 2( \mathcal{L}_1 - \mathcal{L}_0) < (d_1 - d_0) \log(n).$$

If the models are nested (so that $M_0$ is a submodel of $M_1$), then the numerator of $W_{1,0}$ converges in law to a $\chi^2$ distribution with $df = d_1 - d_0$ by Wilk's Theorem.  In this case, we note that by Slutsky's Theorem, $W_{1,0}$ asymptotically has the distribution of an F-statistic with $df=(d_1 - d_0, \infty).$

Hence, under the nested models assumption, the probability that $(ii)$ and $(iii)$ hold when they should {\bf not} (i.e. choose $M_1$ when $M_0$ is the truth) is given by equation (\ref{eqn:badProb}).
\end{proof}

\subsection{Illustrative Example}
We now present an analysis for specific null and alternative models where we can explicitly calculate the probabilities of cases (2.a) and (2.b).  The purpose is to demonstrate that for finite $n$, the choice of penalty term is nontrivial.

Let $d, d_0 \in \mathbb{N}$ with $d_0 < d$.
Consider the nested models:
\begin{itemize}
\item $M_1 = \{ N(\mu, I) : \mu \in \mathbb{R}^d\}$
\item $M_{0, d_0} = \{ N(\mu'_{d_0}, I) : \mu' \in \mathbb{R}^{d}, \mu'_j = 0 \mbox{ for } j = d_0+1, d_0+2, \dots, d\},$
\end{itemize}
where $I$ is the $d-$dimensional identity matrix.

Let
\begin{gather*}
    \mu^* = \left[\begin{matrix}  1 & \frac{1}{\sqrt{2}} & \frac{1}{\sqrt{3}} & \dots & \frac{1}{\sqrt{d}} \end{matrix} \right]^T \\
     \mu^*_{0, d_0} = \left[ \begin{matrix}  1 & \frac{1}{\sqrt{2}} & \frac{1}{\sqrt{3}} & \dots & \frac{1}{\sqrt{d_0}} & 0 & 0 & \dots & 0\end{matrix}\right]^T.
\end{gather*}

Then, $f_1 := N(\mu^*, I) \in M_1$ and $f_{0, d_0} := N(\mu^*_{0,d_0}, I) \in M_0.$
Suppose that $X_1, X_2, \dots, X_n \overset{i.i.d.}{\sim} f_1 \mbox{ or } f_{0, d_0}$ according to whether or not $M_1$ is the true model.

Fix a realization of the data $(x_1, x_2, \dots, x_n).$  Let $\bar{x} = \frac{1}{n}\sum_{i=1}^n x_i$ and $\bar{x}_{0, d_0}$ be $\bar{x}$ with the coordinates after $d_0$ set to 0.
Twice the maximized loglikelihood of the data under $M_1$ is (up to constants) $$2\mathcal{L}_1 := -\sum_{i=1}^n (x_i - \bar{x})^T(x_i - \bar{x}).$$
Under $M_{0,d_0}$ it is 
\begin{align*}
2\mathcal{L}_{0, d_0} :=& -\sum_{i=1}^n (x_i - \bar{x}_{0, d_0})^T(x_i - \bar{x}_{0, d_0}) \\
=&  \mathcal{L}_1 - n \|\bar{x} - \bar{x}_{0, d_0}\|_2^2, 
\end{align*}
since $2 \sum_{i=1}^n (x_i - \bar{x})^T( \bar{x} - \bar{x}_{0, d_0})=0$. Thus we find that $\mathcal{L}_{0, d_0} < \mathcal{L}_1.$
\begin{enumerate}
\item[2a)] Under $f_1$,  \begin{eqnarray*}
n \|\bar{x} - \bar{x}_{0, d_0}\|_2^2 &=& \sum_{j=d_0+1}^d (\sqrt{n}\bar{x}_j)^2  := Y_{1} , 
\end{eqnarray*}
where $Y_{1} \sim$ noncentral $\chi^2_{df = (d - d_0)}(n\sum_{j=d_0+1}^d \frac{1}{j}).$   Hence, for any $m> 0,$
\begin{align*}
Pr_{f_1}\Big( 2 \mathcal{L}_1 - d \log(m)& < 2 \mathcal{L}_{0, d_0} - d_0 \log(m) \Big)\\
& = Pr\left(  Y_{1} < (d - d_0) \log(m)\right) .
\end{align*}
Therefore, 
\begin{align*}
Pr_{f_1}&(T = 0\  \ \text{and}\  \ T'_m = 1) =\\
 &Pr\left( (d - d_0) \log(m) < Y_{1} < (d - d_0) \log(n) \right).
\end{align*}
\item[2b)] Under $f_{0, d_0},$
\begin{equation}
n \|\bar{x} - \bar{x}_{0, d_0}\|_2^2 = Y_{d_0},
\end{equation}
where $Y_{d_0} \sim \chi^2_{df = (d - d_0)}$.  Thus,
\begin{align*}
Pr_{f_{0,d_0}}&\left( 2 \mathcal{L}_1 - d \log(m) < 2 \mathcal{L}_{0, d_0} - d_0 \log(m) \right) \\
&= Pr\left( Y_{d_0} < (d - d_0) \log(m) \right).
\end{align*}
Hence, 
\begin{align*}
Pr_{f_{0,d_0}}&(T = 0 \mbox{ and } T'_m =1)=\\
& Pr\left(  (d - d_0) \log(m) < Y_{d_0} < (d - d_0) \log(n) \right).
\end{align*}

\end{enumerate}

Clearly, we can vary $n, m, d,$ and $d_0$ to influence these probabilities.  We will now do so to show how (a) some penalty is better than none; (b) the AIC penalty can result in mistakes; and (c) the optimal penalty depends on $n, m, d,$ and $d_0$.

Figure \ref{fig:probsNVary} depicts the probability of (2.a) and (2.b) when $n$ and $m$ vary for fixed $d=200$ and $d_0 = 190$ under the alternative (a) and null (b) hypotheses.  Lower penalties than the BIC would use generally result in better decisions under $f_1$.  However, there is approximately a $3\%$ chance of error under $f_0$ when the BIC would be correct with the penalty of $\exp(2),$ that of the AIC.

\begin{figure}[t!]
    \centering
\includegraphics[width=.95\textwidth]{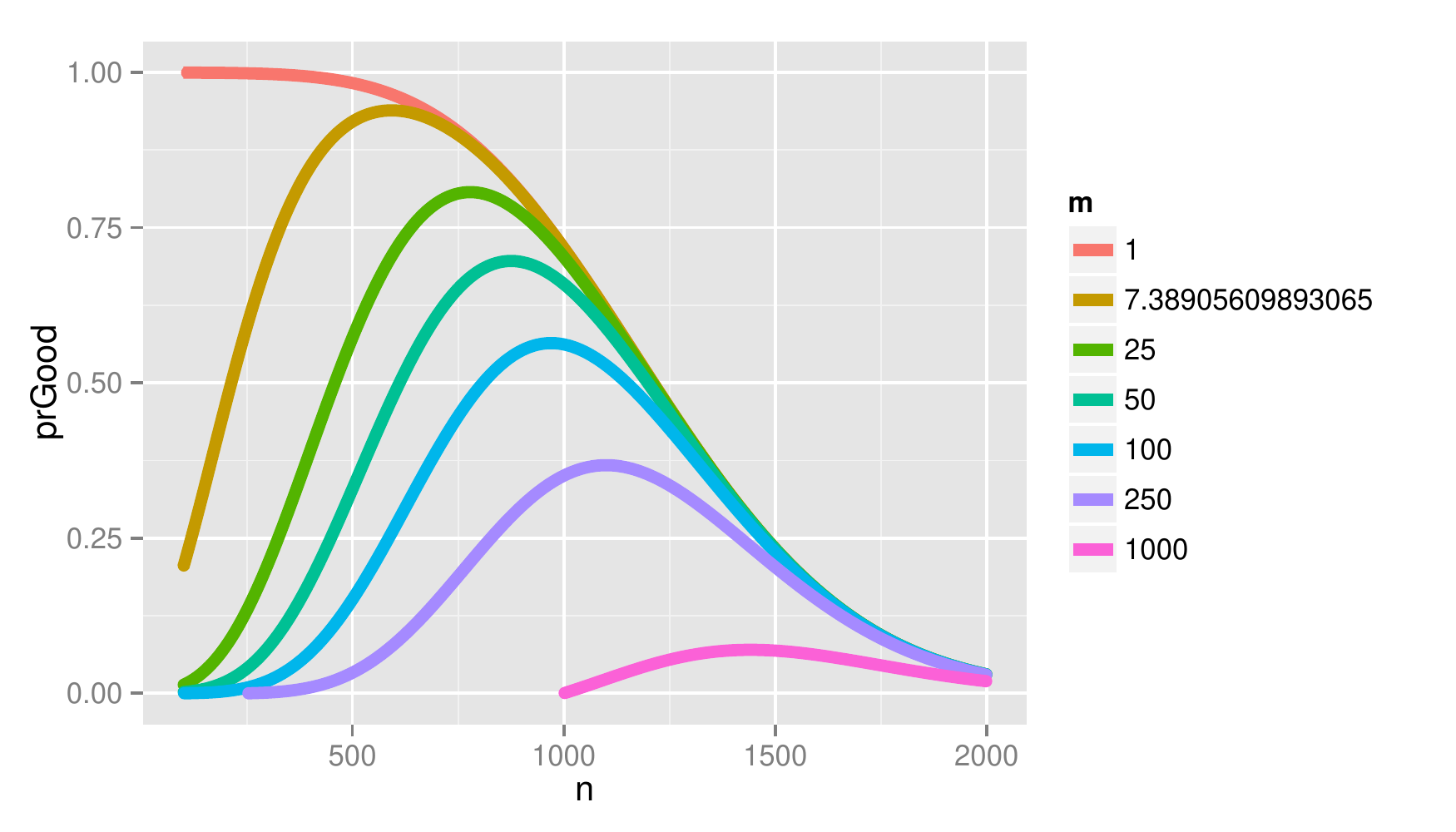}
      \caption{Probabilities of correctly choosing the alternative over the null when the BIC would not do so.  For all curves, $d = 200$ and $d_0 = 190.$  Note that $7.389... = \exp(2),$ the AIC penalty.}
    \label{fig:probsNVary}
\end{figure}

Figure \ref{fig:probsd1Vary} depicts the probability of (2.a) and (2.b) when $d$ and $m$ vary for fixed $n=1000$ and $d_0 = d - 10$ under the alternative (a) and null (b) hypotheses.  Lower penalties than the BIC would use generally result in better decisions under $f_1$ without sacrificing making too many new mistakes under $f_0$.

\begin{figure}[t!]
    \centering
\includegraphics[width=.95\textwidth]{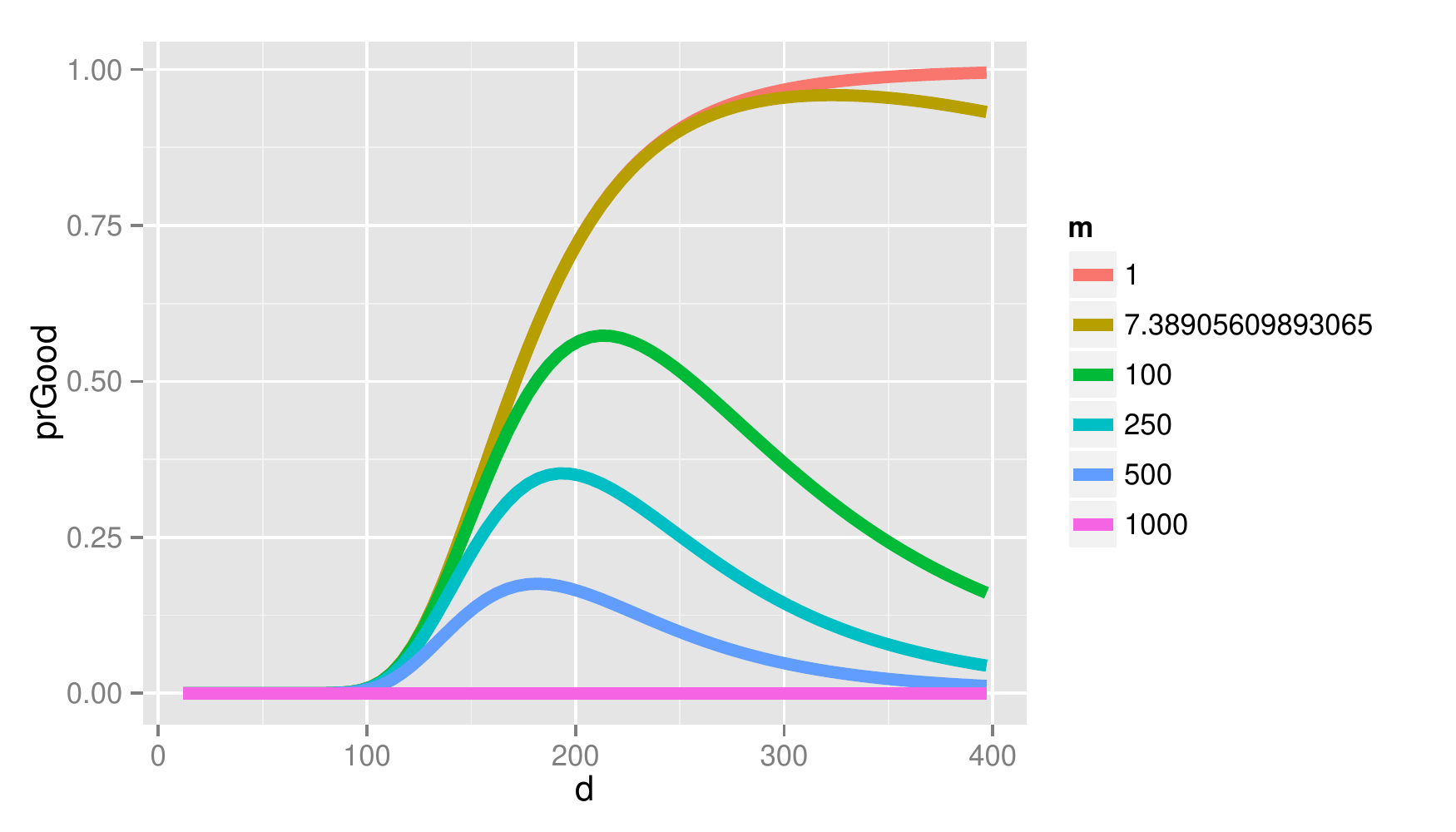}
      \caption{Probabilities of correctly choosing the alternative over the null when the BIC would not do so.  For all curves, $n = 1000$ and $d_0 = d-10.$}
    \label{fig:probsd1Vary}
\end{figure}

Figure \ref{fig:probsGapVary} depicts the probability of (2.a) and (2.b) when $d_0$ and $m$ vary for fixed $n=1000$ and $d= 200$ under the alternative (a) and null (b) hypotheses.  Lower penalties than the BIC would use generally result in better decisions under $f_1$ and not too much worse decisions under $f_0$.  However, there is approximately a larger chance of error under $f_0$ when the BIC would be correct with the AIC penalty for smaller $d_0$.

\begin{figure}[t!]
    \centering
\includegraphics[width=.95\textwidth]{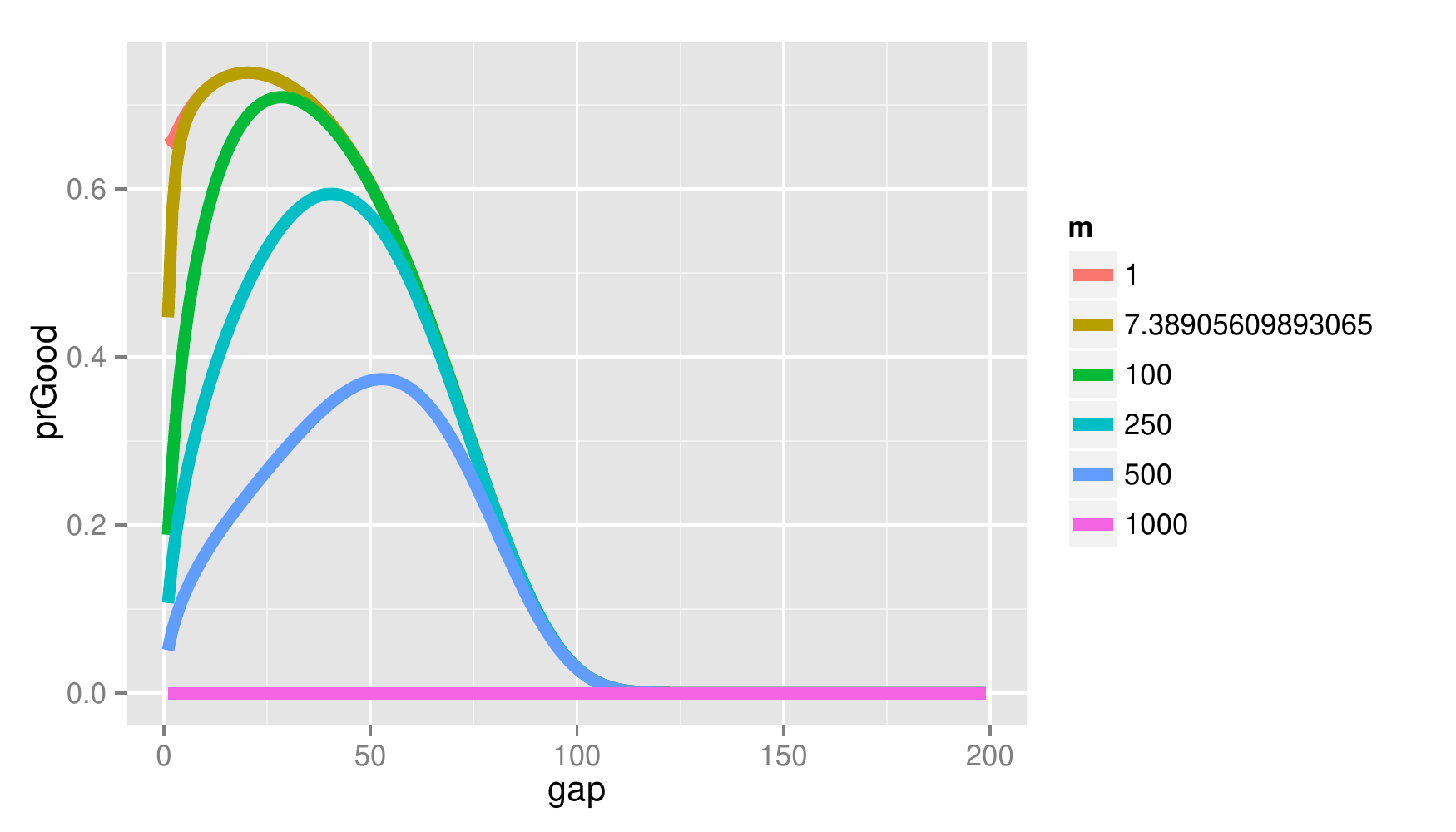}
      \caption{Probabilities of correctly choosing the alternative over the null when the BIC would not do so.  For all curves, $n = 1000$ and $d=200.$}
    \label{fig:probsGapVary}
\end{figure}

This example may seem overly simplistic.  Indeed, it does not involve semi-supervised clustering at all.  However, it does demonstrate the complexities of penalization in model selection with finite samples.

\subsection{Simulation Study}

We would like to demonstrate the impact of the previously discussed model selection complexities from the illustrative example on the inference task of semi-supervised clustering.  To that end, consider the following procedure for constructing a dataset $\mathcal{X} \subset \mathbb{R}^2$, semi-supervising it, and clustering it using model-based clustering with different penalties.  We will then compare the resulting ARI scores.

   Let $$\mu_1 =  \mu_2 = 
  \begin{bmatrix}
  0     \\
   0   
\end{bmatrix} \mbox{ and } \mu_3 =   \begin{bmatrix}
  2   \\
   2   
\end{bmatrix}.$$ Also, let $$\Sigma_1 = \begin{bmatrix}
    .5  & .35 \\
    .35 & .5 
\end{bmatrix}$$ and $$\Sigma_2 = \begin{bmatrix}
    .5    &  -.35 \\
    -.35 & .5 
\end{bmatrix}.$$  Consider the following procedure for generating one Monte Carlo Replicate. 

\begin{enumerate}
\item Define the pdf of each datum as being from a mixture model $f(X) = \sum_{k=1}^{3} \pi_k f_k(X),$ where $f_k=N(\mu_k, \Sigma_k)$ with $6\Sigma_3$ generated using the onion method of \citep{joe2006generating}
\item Draw $n^S = 100$ supervised from the first two components.
\item Draw $n^U = 5, 10, 20, 40, 80, \dots, 640$ unsupervised from the full mixture model.
\item Cluster using $2 - 5$ Gaussians with various constraints on the proposed covariance matrices.
\item Choose the models for clustering based on different values of $m$ in the penalty term $d \log(m),$ where $m \in [n^U, n^U + n^S].$  
\item Calculate resulting ARIs using the chosen models..
\end{enumerate}

Figures \ref{fig:bicPenalty} and \ref{fig:bicFinalPenalty} shows how different penalties can affect the overall clustering performance on the unlabeled data.  Generally, we see that with more unsupervised data, there are less differences between the different choices of penalty terms in the interval $[n^U, n^U+n^S],$ which makes sense given the asymptotic results.  In this particular example, less penalization allowed us to detect the third component sooner, improving the ARI values in the resulting clustering.  Thus, we have shown that in an example closer to our problem of interest, the restricted penalization derived in the previous section can be efficacious as compared to the standard penalty.  Additionally, as $n$ grows, the differences between the information criteria that are dependent on $n$ to the same order becomes negligible even for relatively small values of $n$.

\begin{figure}[t!]
    \centering
    \subfigure[$n=105$] {\includegraphics[width=.45\textwidth]{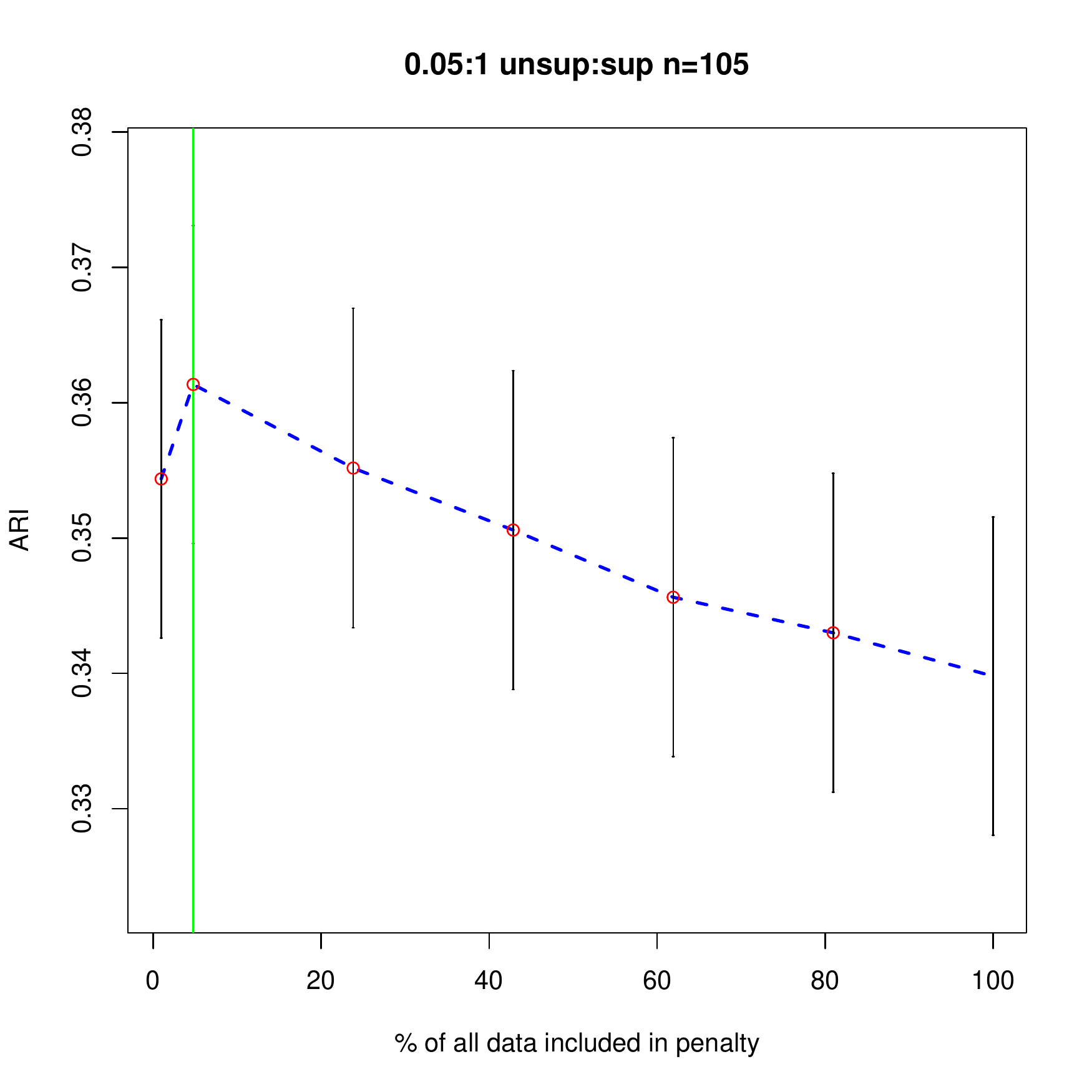}}
      \subfigure[$n=110$] {\includegraphics[width=.45\textwidth]{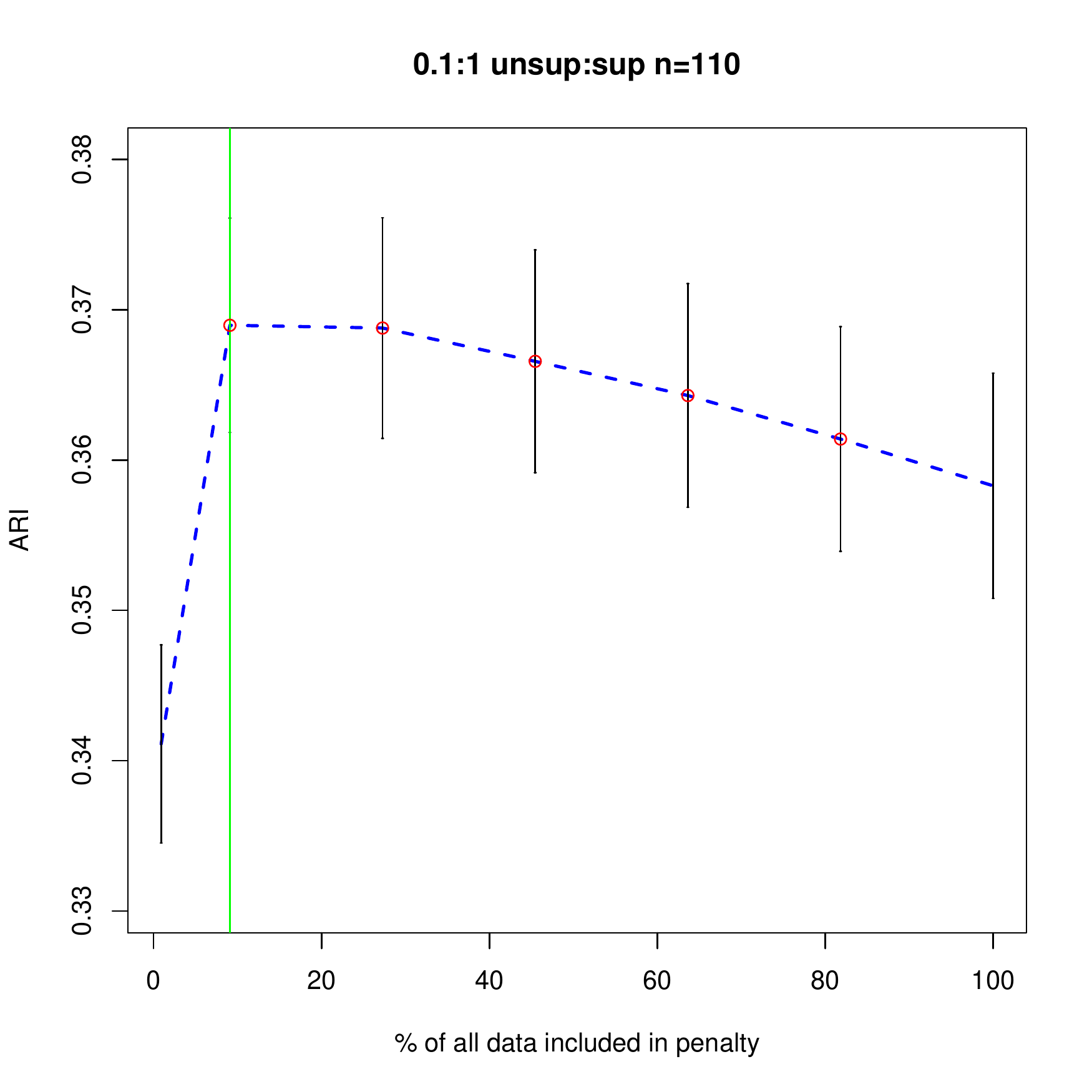}}
       \subfigure[$n=125$] {\includegraphics[width=.45\textwidth]{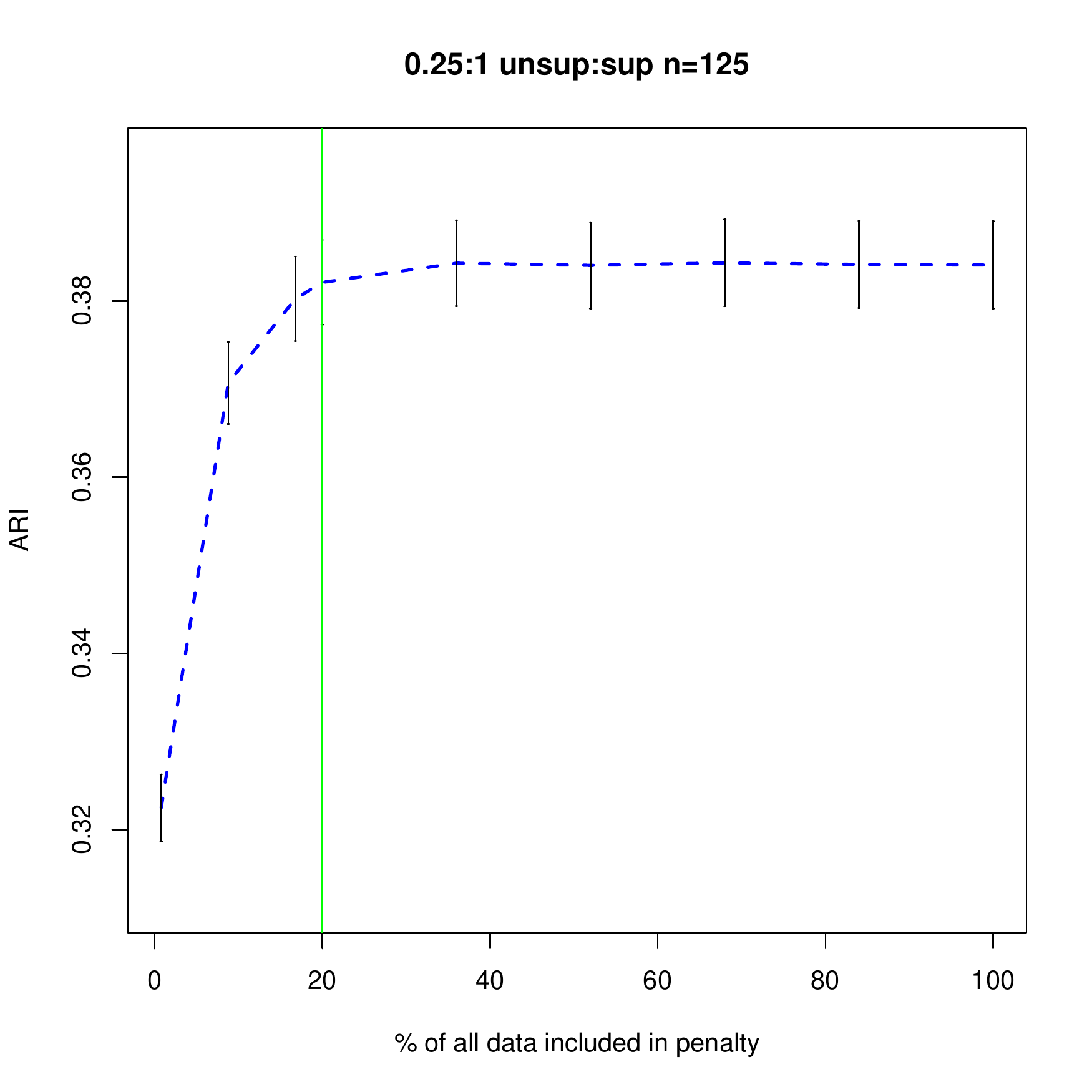}}
        \subfigure[$n=150$] {\includegraphics[width=.45\textwidth]{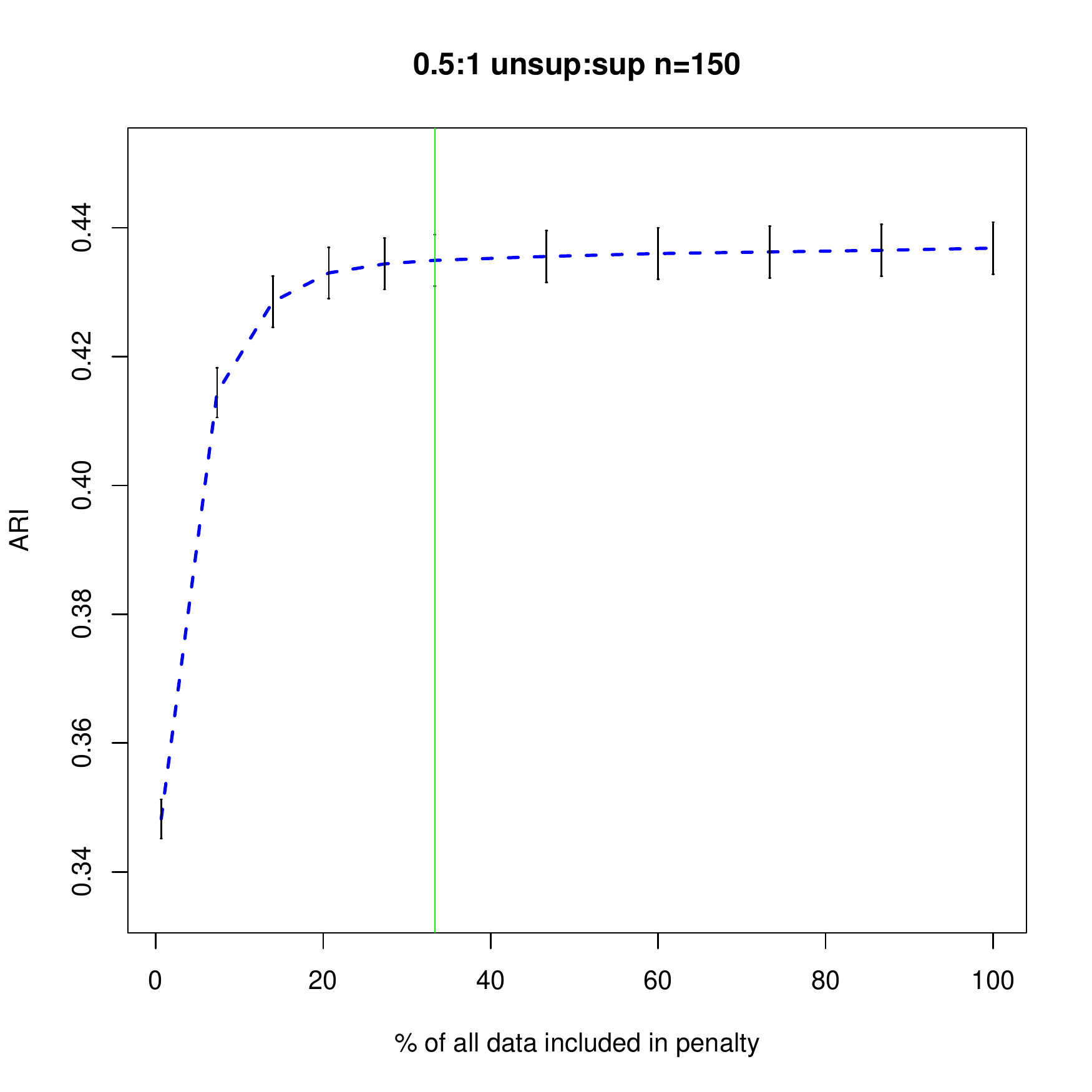}}
      \caption{ARI vs penalty in $BIC'(m).$  Higher is better.  Red circles represent significant paired Wilcoxon tests for larger ARI values than vs the standard BIC (at the .05 level).  The area to the right of the green line represents the interval $[n^U, n^U + n^S].$ }
    \label{fig:bicPenalty}
\end{figure}

\begin{figure}[t!]
    \centering
    \includegraphics[width=.9\textwidth]{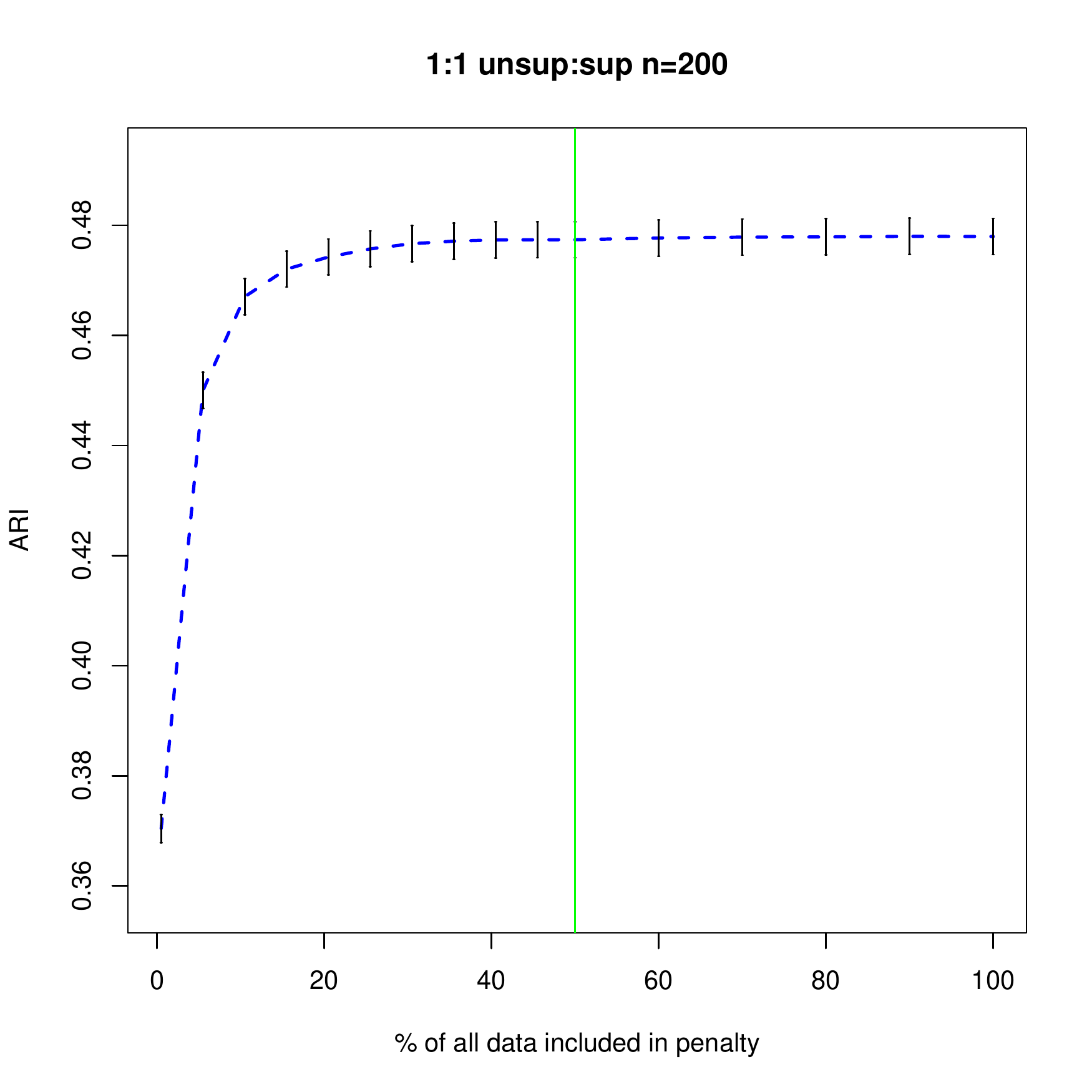}    
    \caption{ARI vs penalty in $BIC'(m)$ for largest value of the total dataset simulated.  Higher is better.  Red circles represent significant paired Wilcoxon tests for larger ARI values than vs the standard BIC (at the .05 level).  The area to the right of the green line represents the interval $[n^U, n^U + n^S].$}
    \label{fig:bicFinalPenalty}
\end{figure}

\section{Identifying Fly Behaviotypes}\label{sec:flyExp}
In one of the motivating applications for this work, classes of neurons in \emph{Drosopholia} larvae are controlled using optogenetics (cf. \cite{optoMOTY} regarding optogenetics).  In \cite{vogelstein2014discovery}, they observe the reactions of the affected larvae to stimuli in high-throughput behavioral assays. The goal is to determine which classes of neurons cause similar changes in behavior when deactivated.

We initially collected data on $n = 37780$ larvae grouped into $b = 2062$ dishes. By changing the optogenetic procedure, $\ell = 11$ known lines are created, pbd1, 38a1, 61d0, ppk1, 11f0, pbd2, 38a2, pbd3, ppk2, iav1, and 20c0.  Of these lines, we discarded the larvae in pbd3 and ppk2 because they had less than 40 larvae each.  Further, we discarded all larvae with an unknown line.  After curating the data, we now have $n=7730$ larvae.  Each larva is observed while responding to various stimuli.  Vogelstein \emph{et al.} cite{vogelstein2014discovery}  expand on the methods of \cite{priebe2001olfactory} and \cite{priebe2004integrated} and describe how the observations are embedded into $\mathbb{R}^d$, where here $d=30$.  We use the method presented in \cite{zhu2006automatic} to select the elbow of the scree plot to further reduce the data to $d=14$ dimensions.  We did not perform any additional feature selection.

For each Monte Carlo replicate, we use a small subset of the data where the line was known (101 randomly chosen animals from each of the 9 remaining lines) along with $m=0, 1, 2, \dots 8$ pre-labeled data randomly chosen from the 101 animals in each line.  We wrote an R package entitled {\tt ssClust} to perform semi-supervised GMM with similar options to the popular Mclust software, which is an R package for GMM \citep{mclustVersion4}.  We cluster the points using both {\tt ssClust} and Mclust.  Then, we compute the ARI against the line type for both methods.  

We observe that the initialization strategy of using {\tt ss-k-means++} instead of hierarchical clustering results in a significant improvement to the ARI even with no supervision. We find that a single animal per line significantly improves the clustering results, as expected (cf. Figure \ref{fig:martaExper}).  Further, we can see that there are diminishing returns on additional supervision starting at $3$ supervised examples per line. 

\begin{figure}[!ht]
   \centering      
   \includegraphics[ scale=0.5]{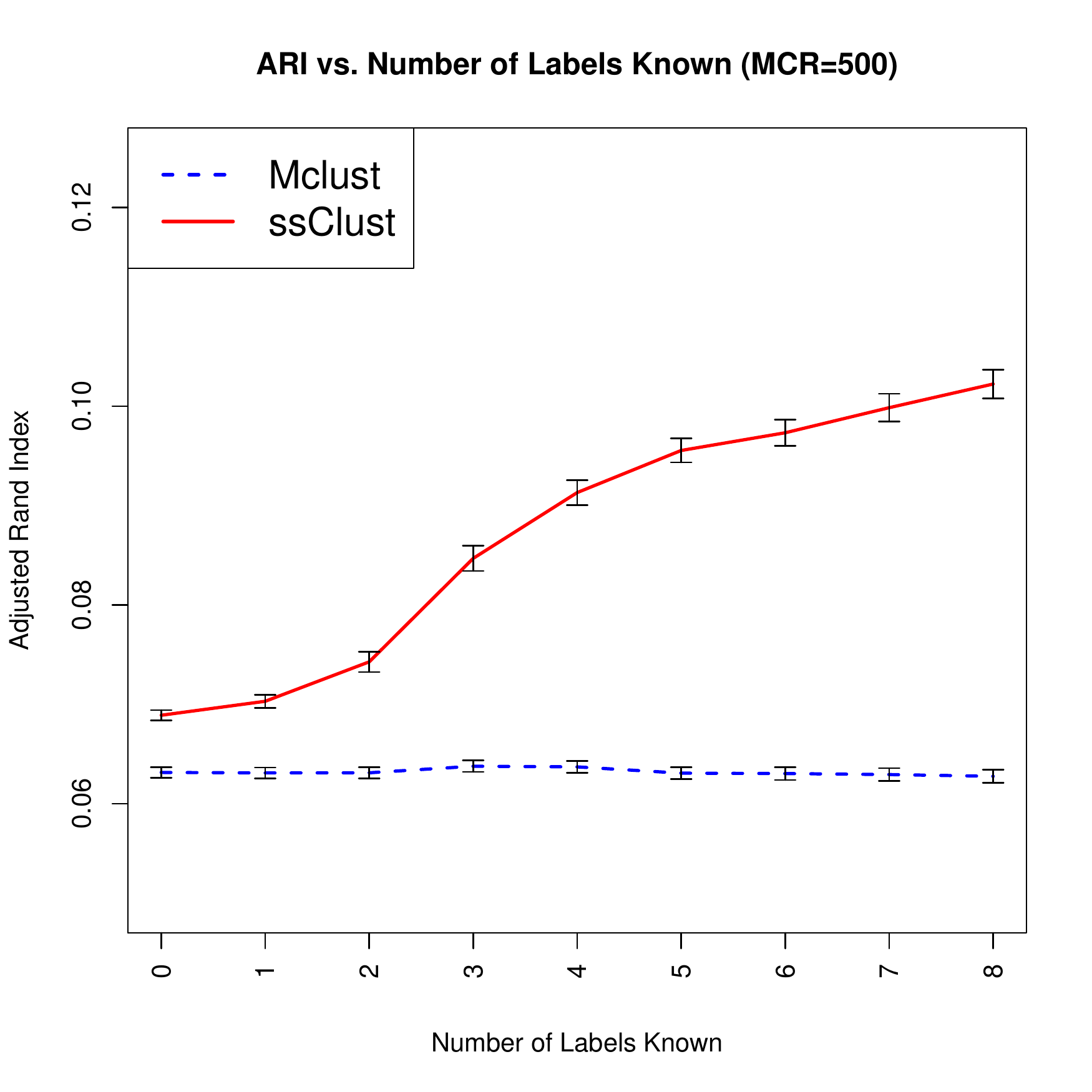}      
 \caption{Average value of the ARI for 500 Monte Carlo replicates for experiment \ref{sec:flyExp}.    Error bars are $\pm 2$ standard errors.} \label{fig:martaExper}
\end{figure}

\subsection{Differentiating Lines} \label{exp4}
Vogelstein \emph{et al.} \cite{vogelstein2014discovery} posed and answered questions of the form, ``Is line X different than line Y (in terms of behaviotypes)?''  As an illustrative example, we will perform a similar analysis comparing the ppk1 and pbd1 lines but will additionally incorporate some supervision.  Here, our interpretation of the clusters will shift from lines to behaviotypes. Our proposed procedure for distinguishing between lines is as follows:
\begin{enumerate}
\item[(1)]  Sample $101$ animals from each line
\item[(2)]  Label $3$ animals from each line according to a labeling strategy (see below)
\item[(3)]  Cluster the animals using {\tt ssClust} and Mclust into between 2 and 12 clusters using the parameterizations EEE, VVV, VII, and EII.
\item[(4)]  Collect the results and construct empirical probability of cluster membership for each line
\item[(5)]  Compute the Hellinger distance between the two lines to be compared and store this as statistic $H$
\item[(6)]  Simulate the distribution of $H$ under the null hypothesis that the lines are the same by permuting the labels and computing the Hellinger distance $H_i$ for $i = 1, 2, \dots, B$ for some large integer $B$.
\item[(7)]  Return an empirical p-value based on steps (5) and (6).
\end{enumerate}

In item (2) we did not specify a labeling strategy in detail.  We propose a strategy that is reasonably realistic to execute for our particular dataset.  Vogelstein \emph{et al.} \cite{vogelstein2014discovery} used a hierarchical clustering scheme in which the first few layers were visually identifiable by watching the worms.  Thus, by using their labels from an early layer (layer 2, with 4 clusters total), we have a plausible level of supervision for a human to have performed.  Specifically, we sample at random a label from the true labels among a line with weights proportional to the counts of each label in that line.  Using that label, we sample 3 worms of that label in that line to be the supervised examples.  Next, for the other line, we sample a different label, and 3 examples with that true label.

We see that based on all three labeling strategies that both {\tt ssClust} and Mclust are able to corroborate the results from \cite{vogelstein2014discovery} even with small amounts of data:
ppk1 and  pbd1 are statistically different (p-value $\approx 1e-4$ for both for 500 MC replicates).

We now show that {\tt ssClust} can answer these questions ``sooner" than Mclust.  That is, the p-value ($pVal$) will be below the significance level with fewer unsupervised examples.  To quantify this concept, we introduce the ``answering time" for algorithm A:

$$\tau_A :=\min_{n \in \mathbb{N}}\Big\{n: 2 \leq n \leq 30 \mbox{ and }  \prod_{q = n}^{30} \mathbbm{1}\{pVal \left(\mathcal{D}_k\right) \leq \alpha\} = 1 \Big\},$$

where here we use the notation $$\mathcal{D}_k := \{X_{1,k}, \dots, X_{q,k}, (X_{99,k}, Y_{99,k}), \dots, (X_{101,k}, Y_{101,k}) : k \in \{1, 2\}\}$$ to be the labeled and unlabeled data.

The p-value constraint bears some explanation; it says that we require a significant p-value for all datasets at least as large as with $n$ unsupervised examples per line.  Note that here we assume our datasets are nested and that they all use the same supervised examples.  Since the answering time will be dependent on the datasets used, we perform a Monte Carlo simulation with 500 random sequences of datasets and report the answering times for both {\tt ssClust} and Mclust (cf. Figure \ref{fig:answeringTime}).  The median answering time for {\tt ssClust} is significantly lower than for Mclust (p-value = 4.7e-12 for paired Wilcoxon signed-rank test).

\begin{figure}[!ht]
   \centering      
   \includegraphics[ scale=0.5]{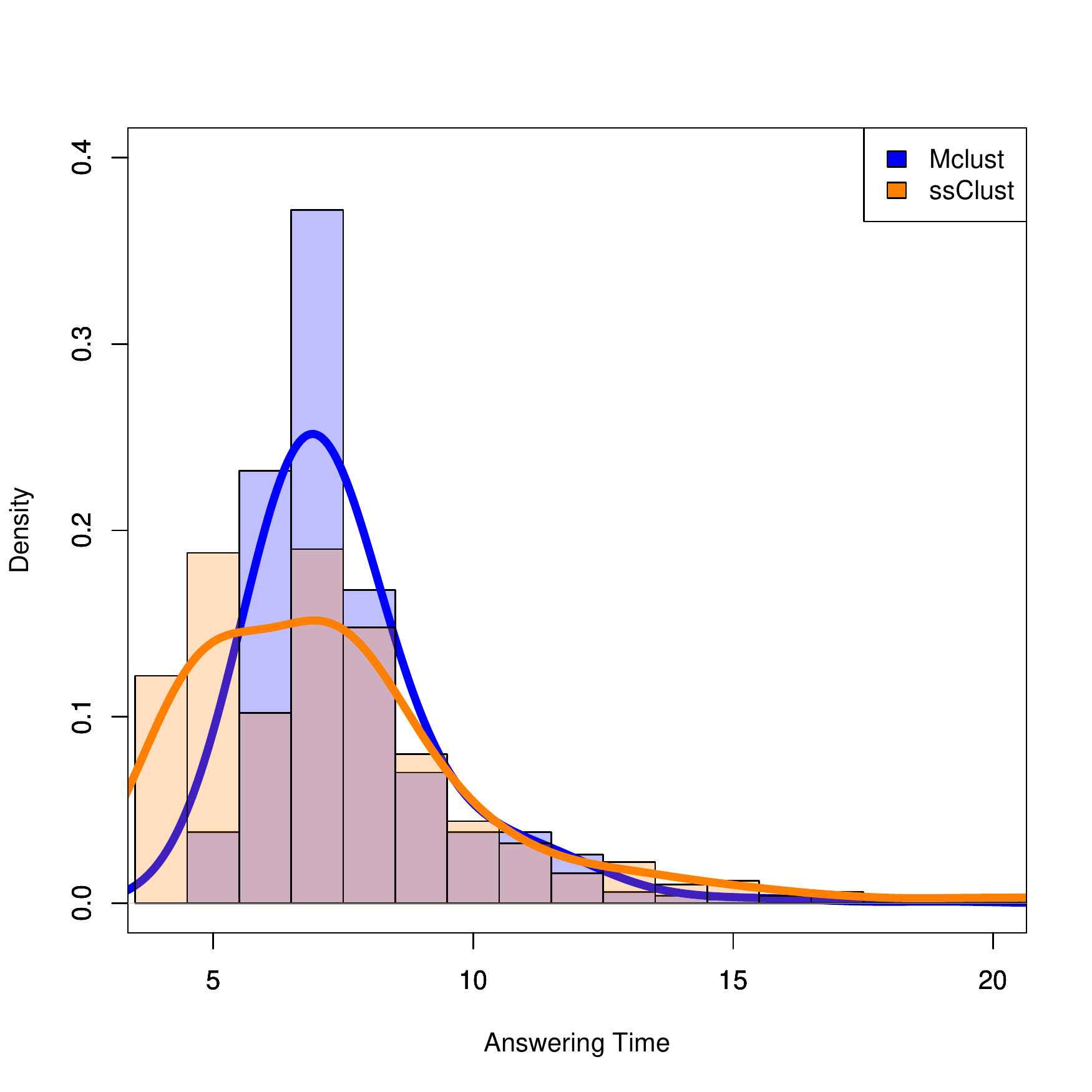}      
 \caption{Frequency distribution of answering times for \ref{exp4} given 500 MC replicates. }
 \label{fig:answeringTime}
\end{figure}
\section{Conclusion}
Previously explored approaches to semi-supervised clustering include a modified K-means, which can be seen as our algorithm constrained to spherical and identical covariance matrices without the adjusted BIC for model selection  (cf. \cite{basu2002semi}).  Others have used latent variables representing clusters and cluster-dependent distributions to give a probabilistic approach to a slightly different problem, where instead of labels being known, only pairwise constraints in the form of must-link and cannot-link are given  (cf. \cite{basu2004probabilistic}).  \cite{wagstaff2001constrained} applied an appropriately modified K-means to this problem.  Finally,  \cite{cohn2003semi} and \cite{xing2002distance} offer a different approach to semi-supervised clustering involves training a measure to account for constraints or labels.

The main contribution of this paper over previous works is presenting a probabilistic framework for semi-supervised clustering and deriving an information criterion consistent with the framework to allow selection of number of clusters and/or clustering variables.  With the corrected BIC, we found that in the simulated examples and fly larvae dataset our method outperformed the most comparable method, Mclust.  In the fly dataset, ssClust was able to recover lines better and yielded a lower answering time more often for the question of whether two particular lines were different, behaviorally.  This indicates that incorporating even a small amount of information can help guide the clustering process.

Future areas of research will involve handling the more nuanced must-link and cannot-link constraints, which are flexible enough to encompass the labels-known problem we have explored in this paper.

\section{Appendix} \label{sec:BICderiv}
{\bf A Derivation of the BIC for the Semi-supervised Model}\\
Assume the data are distributed according to a member of model $M$ described in Section \ref{sec:ssModel}.

 Consider the integrated likelihood: \begin{equation} \label{eqn:intll}
 P(D) = \int P(D|\theta, M) P(\theta| M) d\theta,
 \end{equation} where $P(\cdot)$ is a probability, pmf, or pdf where appropriate.  Let $$Q(\theta) =\log\left(P(D|\theta, M) P(\theta| M)\right),$$ the log of the posterior likelihood.  Suppose that the posterior mode exists, say $\bar{\theta}$.

A second order Taylor expansion about $\bar{\theta}$ gives
\begin{eqnarray*}
Q(\theta) &=& Q(\bar{\theta}) + (\theta - \bar{\theta})^T \nabla Q(\bar{\theta}) + \frac{1}{2}  (\theta - \bar{\theta})^T \nabla^2 Q(\bar{\theta}) (\theta - \bar{\theta}) + o(\|(\theta - \bar{\theta})\|_2^2).
\end{eqnarray*} 
 By the first order optimality necessary conditions, we know  $\nabla Q(\bar{\theta}) =0.$
Note $o(\|(\theta - \bar{\theta})\|_2^2) \rightarrow 0$ faster than $\|(\theta - \bar{\theta})\|_2^2$ as $\theta \rightarrow \bar{\theta}$.  By ignoring the last term, we will approximate $Q(\theta)$ with the truncated Taylor expansion:
\begin{eqnarray*}
Q(\theta) \approx Q(\bar{\theta}) + \frac{1}{2}  (\theta - \bar{\theta})^T \nabla^2 Q(\bar{\theta}) (\theta - \bar{\theta}).
\end{eqnarray*} 

Recalling (\ref{eqn:intll}), we may approximate $P(D|M)$ using a saddle point approximation:
\begin{eqnarray*}
P(D| M) &=&\int \exp\left( \log ( P(D|\theta, M) P(\theta| M) ) \right)d\theta \\
&=& \int \exp (Q(\theta ))d\theta \\
&\approx& \int \exp \left(Q(\bar{\theta})+ \frac{1}{2}  (\theta - \bar{\theta})^T \nabla^2 Q(\bar{\theta}) (\theta - \bar{\theta})\right)d\theta \\
&=&   \exp \left(Q(\bar{\theta}) \right) \int \exp \left( \frac{1}{2}  (\theta - \bar{\theta})^T \nabla^2 Q(\bar{\theta}) (\theta - \bar{\theta})\right)d\theta \\
&=&   \exp \left(Q(\bar{\theta}) \right) \int \exp \left( -\frac{1}{2}  (\theta - \bar{\theta})^T (- \nabla^2 Q(\bar{\theta})) (\theta - \bar{\theta})\right)d\theta. \\
\end{eqnarray*} 
Recognize the integral as proportional to the density of a multivariate Guassian with mean $\bar{\theta}$ and covariance $- \nabla^2 Q(\bar{\theta})$.  Let $H = - \nabla^2 Q(\bar{\theta}).$
Then, we have 
\begin{eqnarray}\label{eqn:laplaceapx}
P(D|M) &\approx&  \exp \left(Q(\bar{\theta}) \right) (2\pi)^{\frac{d}{2}} \det( H^{-1} ) )^{\frac{1}{2}},
\end{eqnarray}
where $d$ is number of free parameters in $\theta$.  If the conditions in  Proposition 3.4.3 in \cite{BD} hold,  we have for $n$ sufficiently large, $H \approx I(\bar{\theta}),$ where $I(\bar{\theta})$ is the Fisher information matrix.

Taking $2\log(\cdot)$ of both sides, (\ref{eqn:laplaceapx}) becomes
\begin{eqnarray*}
2\log(P(D|M)) &\approx&  2Q(\bar{\theta}) +d \log(2\pi) +   \log(\det( I(\bar{\theta})^{-1} ) )\\
 &=&  2\log(P(D|\bar{\theta}, M)) + 2 \log(P(\bar{\theta}| M)) +d \log(2\pi)  - \log(\det( I(\bar{\theta})).\\
\end{eqnarray*}

Now, we must calculate $-\log(\det( I(\bar{\theta}))$.  Define $n\equiv \sum_{i=1}^{C}n_i.$
Observe
\begin{eqnarray*}
I(\theta) &=&   \text{Var}\left[ \frac{\partial}{\partial \theta} \log(p(X,\theta)\right]\\
&=& \text{Var}\left[\sum_{i=1}^n \frac{\partial}{\partial \theta} \log(p(X_i,\theta))\right]\\
&=& \sum_{i=1}^C n_i I(\theta_i) \text{ by independence.}
\end{eqnarray*}
Henceforth, we will use $I_j$ to denote $I(\theta_j)$.  We will assume  that $I_1$ is positive definite, so that it can be written as $$I_1 = S_1^TS_1$$ for some non-singular matrix $S_1$.  Let $J$ denote the $d$-dimensional identity matrix.  For notational purposes, let $$n^* := \max_{2\leq j \leq C}( n_j ).$$

Observe 
\begin{eqnarray*}
\text{det}\left( \sum_{i=1}^C n_i I(\theta_i) \right) &=& \text{det}\left( S_1^T(n_1 J +\left(S_1^T\right)^{-1}\left(\sum_{i=2}^C n_i I(\theta_i) \right)S_1^{-1})S_1\right) \\
&=& \text{det}(I_1)\text{det}\left(n_1 J + n^*\left(S_1^T\right)^{-1}\left(\sum_{i=2}^C \frac{n_i}{n^*} I(\theta_i)\right)S_1^{-1}\right) \\
&=& \text{det}(I_1)n_1^d \text{det}\left( J + \frac{n^*}{n_1}\left(S_1^T\right)^{-1}\left( \sum_{i=2}^C \frac{n_i}{n^*} I(\theta_i)\right)S_1^{-1}\right) \\
&=& \text{det}(I_1)n_1^d \text{det}\left( J + \frac{n^*}{n_1}B\right), \\
\end{eqnarray*}
where $B =\left(S_1^T\right)^{-1})(\sum_{i=2}^C \frac{n_i}{n^*} I(\theta_i))S_1^{-1}$.  It should be noted that $B$ is a positive semi-definite matrix, as it is a $*-$transform of a sum of positive semi-definite matrices, so that Sylvester's Theorem states that $B$ has the same inertia as a positive semi-definite matrix.

Next, we would like to describe the growth of  $\text{det}\left( J + \frac{n^*}{n_1}B\right)$.
For any eigenvalue of $ J + \frac{n^*}{n_1}B$, say $\lambda$, we have by Weyl's theorem
$$1 \leq \lambda  \leq 1 + \frac{n^*}{n_1} \|B\|_2.$$
Observe
\begin{eqnarray*}
\|B\|_2 &=&\max_{\{x \in \mathbb{R}^d : \|x\|_2 = 1\}} x^T B x \\
&\leq&  \max_{\{x \in \mathbb{R}^d_1 : \|x\|_2 = 1\}} \|S_1^{-1}\|_2^2 x^T \left(\sum_{j=2}^C \frac{n_j}{n^*}I_j \right) x \\
 &\leq&  \max_{\{x \in \mathbb{R}^d : \|x\|_2 = 1\}} \|S_1^{-1}\|_2^2 \left(\sum_{j=2}^C \frac{n_j}{n^*}\|I_j\|_2^2 \right)\\
 &\leq&  \max_{\{x \in \mathbb{R}^d : \|x\|_2 = 1\}} \|S_1^{-1}\|_2^2  \left(\sum_{j=2}^C \|I_j\|_2^2 \right),
\end{eqnarray*}
which is independent of $n$.  Let $M_2 \equiv  \max_{\{x \in \mathbb{R}^d : \|x\|_2 = 1\}} \|S_1^{-1}\|_2^2  \left(\sum_{j=2}^C \|I_j\|_2^2 \right).$ 
Then,  we have
\begin{eqnarray*}
\text{det}( J + \frac{n^*}{n_1}B)  &=& \Pi_{m=1}^d \lambda_m \left( J + \frac{n^*}{n_1}B\right) \\
 &\leq& \left(1+\frac{n^*}{n_1}(M_2)\right)^d \\ 
  &\leq& \left(1+\frac{n - n_1}{n_1}(M_2)\right)^d \\
 \end{eqnarray*}
 
 The only term growing with $n$ above is $\frac{n - n_1}{n_1} ,$ the ratio of the supervised data over the unsupervised data.  In general, it is usually much more expensive to obtain additional supervised data than unsupervised;  thus, we find it reasonable to posit that $\frac{n - n_1}{n_1} \rightarrow 0$ in $n$ (i.e. is $o(1)$). In this case, $\log\left(\text{det}( J + \frac{n^*}{n_1}B)\right)$ is $o(1)$ by continuity and our bounds.
 
Hence, 
\begin{eqnarray}
\log\left(\text{det}\right(I(\theta)\left)\right) &=& \log\left(\text{det}(I_1)\right) + d \log(n_1) + \log\left( \text{det}( J + \frac{n^*}{n_1}B)\right)\\
&=& \log\left(\text{det}(I_1)\right) + d \log(n_1) + o(1).
\end{eqnarray}

When the posterior mode is nearly or is equal to the MLE, as is the case when the prior on $\theta$ is uniform and $\Theta$ is finite (cf. Bickel and Doksum pp. 114), substitute $\hat{\theta}$, the MLE, for $\bar{\theta}$, the posterior mode.
Then, we have
\begin{eqnarray*}
2\log(P(D|M)) &\approx&    2\log(P(D|\hat{\theta}, M)) + 2 \log(P(\hat{\theta}| M)) +d \log(2\pi)  - \log(\det( I_1(\bar{\theta}))\\
&=& 2\log(P(D|\hat{\theta}, M))  - d \log (n_1) + o\left(1\right) + O(1) \text{ using equation } (2).
\end{eqnarray*}
Note that the terms of order less than $O(1)$ get washed out in the limit as $n \rightarrow \infty$. If we drop them, we have our derivation of the adjusted BIC. 
\qed
                                                                                                                                                                                                                                              
\bibliographystyle{plainnat}

\end{document}